\theoremstyle{plain}
\newtheorem{theorem}{Theorem}[section]
\newtheorem{proposition}[theorem]{Proposition}
\newtheorem{lemma}[theorem]{Lemma}
\theoremstyle{definition}
\newtheorem{example}[theorem]{Example}
\newtheorem{remark}[theorem]{Remark}
\theoremstyle{remark}
\newcounter{numpar}[section]
\numberwithin{equation}{section}
\newcommand*{\wt}{\widetilde}
\newcommand*{\ol}{\overline}
\newcommand*{\dd}{d}     
\newcommand*{\eps}{\varepsilon}
\newcommand*{\ffi}{\varphi}
\newcommand*{\cA}{\mathcal A}
\newcommand*{\cE}{\mathcal E}
\newcommand*{\cF}{\mathcal F}
\newcommand*{\bbN}{\mathbb N}
\newcommand*{\bbR}{\mathbb R}
\newcommand*{\EE}{\mathsf E}
\newcommand*{\PP}{\mathsf P}
\newcommand*{\QQ}{\mathsf Q}
\newcommand*{\la}{\langle}
\newcommand*{\ra}{\rangle}
\newcommand*{\loc}{{\mathrm{loc}}}
\begin{document}
\title[Delta Hedging in Markets with Jumps]{A Note on Delta Hedging in Markets with Jumps}

\author{Aleksandar Mijatovi\'{c}}
\address{Department of Statistics, University of Warwick, UK}
\email{a.mijatovic@warwick.ac.uk}

\author{Mikhail Urusov}
\address{Institute of Mathematical Finance, Ulm University, Germany}
\email{mikhail.urusov@uni-ulm.de}

\thanks{This research was supported 
by the LMS grant R.STAA.3037 and through the program ``Research in Pairs''
by the Mathematisches Forschungsinstitut Oberwolfach in 2011. 
MU would like to thank the Statistics Department at Warwick, where a part of this work
was carried out, for hospitality and a productive research atmosphere.}

\keywords{Delta hedging, exact replication, martingale representation, 
Black--Merton--Scholes model, models with jumps}

\subjclass[2010]{91B25, 91B70}

\begin{abstract}
Modeling stock prices via jump processes is common in financial markets.
In practice, to hedge a contingent claim one typically uses the so-called delta-hedging strategy.
This strategy stems from the Black--Merton--Scholes model where it perfectly replicates contingent claims.
From the theoretical viewpoint, there is no reason for this to hold in models with jumps.
However in practice the delta-hedging strategy is widely used 
and its potential shortcoming in models with jumps is disregarded
since such models are typically incomplete and hence most contingent 
claims are non-attainable. 
In this note we investigate a complete model with jumps
where the delta-hedging strategy is well-defined for regular
payoff functions and is uniquely determined via the risk-neutral 
measure. In this setting we give examples of (admissible) 
delta-hedging strategies with bounded discounted value processes, 
which nevertheless fail to replicate the respective bounded contingent claims. 
This demonstrates that the deficiency of the delta-hedging strategy in the presence 
of jumps is not due to the incompleteness of the model but is inherent in 
the discontinuity of the trajectories.
\end{abstract}

\maketitle

\section{Introduction}
\label{sec:i}
The universal success of the Black--Merton--Scholes model (BMS model) is 
in no small measure due to the fact that the risk of a European 
contingent claim can be neutralised using a replication strategy
which, at any moment in time, is a function solely of the current value of the underlying 
security.  This strategy is known as the delta-hedging strategy and is 
universally applied in the financial markets across pricing models.

It is well documented in the literature that modelling 
stock prices as geometric Brownian motions is inconsistent both with the statistical 
properties exhibited by real financial data (``excess kurtosis and skewness'')
and with the option prices observed in the financial markets (``implied volatility smile and skew''),
see e.g. Lai and Xing~\cite{LaiXing:08}, 
Shiryaev~\cite[Ch. IV]{Shiryaev:99}  
for the former, and Gatheral~\cite{Gatheral:06}, and Lipton~\cite{Lipton:01} for the latter.
A widely used class of models, designed to deal with some of these issues, is
processes with jumps (see e.g. Cont and Tankov~\cite{ContTankov:03}, Schoutens~\cite{Schoutens:03}).
The question of how to reduce the outstanding risk of a contingent claim in such 
a model, which typically does not allow perfect replication, 
remains of central importance for market practitioners. 
Even though there are many approaches to hedging in incomplete markets in the literature
(see e.g. the papers Davis, Panas and Zariphopoulou~\cite{DavisPanasZariphopoulou:93}, 
Hubalek, Kallsen and Krawczyk~\cite{HubalekKallsenKrawczyk:06}, 
Schweizer~\cite{Schweizer:01}, Tankov~\cite{Tankov:10},  
the monographs Cont and Tankov~\cite{ContTankov:03}, 
Delbaen and Schachermayer~\cite{DelbaenSchachermayer:06}, 
F\"ollmer and Schied~\cite{FollmerSchied:06}, 
Karatzas and Shreve~\cite{KaratzasShreve:98},
Shiryaev~\cite{Shiryaev:99} and the references therein),
which are applicable in rather general situations, the delta-hedging strategy 
remains the key tool in practice 
because it (1) affords a ``natural'' interpretation based on the BMS model and (2) 
typically possesses a computationally accessible 
implementation 
(one is required to work with functions of time and 
spot price rather than with general predictable processes on some stochastic basis).

Let 
$V(t,x)$ 
be the time-$t$-value (with the stock price at level~$x$) 
of a self-financing strategy that replicates some European contingent claim
in the BMS model.
The \textit{delta-hedging strategy}
is determined uniquely by the requirement
that at each time $t$ a trader following this strategy
holds 
$\frac{\partial V}{\partial x}(t,S_t)$ units of the stock
(and the rest of the wealth in a riskless security), 
where $S_t$ denotes the stock price at time~$t$.
Put differently, the number of units of the stock to be held at time $t$
is computed as a function of $S_t$ by evaluating the partial derivative 
$\frac{\partial V}{\partial x}$
of the value function.

To form an analogous strategy in a model
other than BMS,
where the stock price process $(S_t)_{t\in[0,T]}$ is cadlag,
we first need to understand what the function $V(t,x)$ should be,
since perfect replication is in general no longer possible.
In the financial markets
typically a \emph{pricing measure}~$\PP$ 
is chosen
from a possibly infinite class of equivalent local martingale measures.
The value of a contingent claim is then assumed to be given by the conditional 
expectation of the claim's payoff under the measure~$\PP$.
Let 
$(\cF_t)_{t\in[0,T]}$ 
denote the filtration that models the market information.
If $(S_t)$ is an $(\cF_t,\PP)$-Markov process, 
then the time-$t$-value of a contingent claim $f(S_T)$ is therefore given 
by the function $V(t,x)$, defined by the formula $V(t,S_t)=\EE_\PP(f(S_T)|\cF_t)$.
The key difference with the BMS model is that,
due to the incompleteness of 
$(S_t)$,
in general
$V(t,S_t)$
no longer represents the value of a replicating strategy. 
Nevertheless one can still define the delta-hedging strategy as the self-financing
strategy that starts with the initial capital 
$V(0,S_))$
and holds 
$\frac{\partial V}{\partial x}(t,S_{t-})$ 
units of the stock at each time~$t$
(the argument $S_{t-}$ is necessary to get a predictable strategy).
In general the partial derivative of the value function 
$\frac{\partial V}{\partial x}$ may not exist. 
However, since the delta-hedging strategy 
is widely used in the financial markets,
it is necessary to understand its theoretical properties 
whenever it is well-defined.

The reason why the delta-hedging strategy works in the BMS model is based on the following two arguments:
\smallskip

\noindent (A) integral representation of Brownian martingales implies that any contingent claim is 
replicable;\\
\noindent (B)~\label{p:B} Ito's formula for continuous semimartingales implies that the replicating strategy
should necessarily take the form $\frac{\partial V}{\partial x}(t,S_t)$.
\smallskip

\noindent For general cadlag semimartingales Ito's formula has a different form and even if 
the statement analogous to (A) were true,
there would still be no justification for~(B), i.e. in models with jumps there is no 
reason for the delta-hedging strategy to reduce, let alone eliminate, the remaining risk
of a contingent claim. 
This point concerning~(B) is generally disregarded in practice
because 
in typical models with jumps the analogue of (A) fails to hold.
The reasons why delta hedging 
is used in models with jumps
can be summarized as follows:
\smallskip

\noindent (a) models with jumps applied in financial markets are incomplete
and hence exact replication cannot be required from any hedging strategy;\\
\noindent (b) delta-hedging strategy appears quite naturally because it works in the BMS model,
while the lack of its theoretical justification for models with jumps is attributed to~(a)
alone rather than to the inherent deficiencies of the strategy itself.
\smallskip

\noindent In this paper we consider a simple complete model with jumps 
and demonstrate that the delta-hedging strategy does not typically replicate 
claims that can be replicated by other strategies. This suggests that~(a) should not 
be used to justify the shortcomings of the delta-hedging strategy in general
and that alternative hedging strategies should also be pursued in practice.

The rest of the paper is organised as follows.
Section~\ref{sec:gs}
describes trading strategies and contingent claims in a general semimartingale model for 
the stock price.
In Section~\ref{sec:bsm} we briefly recall the delta-hedging strategy in the BMS model
with the emphasis on related subtle points (such as the nonuniqueness 
of an admissible replicating strategy of a contingent claim).
In Section~\ref{sec:j} we construct the aforementioned complete market with jumps,
where the delta-hedging strategy fails to replicate contingent claims. 
We show that even in such a simple market
model with jumps, an
admissible strategy which replicates a given attainable contingent claim is not unique. 
For a general European payoff, we give an explicit formula for the admissible
replication strategy starting from the minimal possible initial capital.
In the case of a smooth payoff
(which makes the delta-hedging strategy well-defined),
we compare the value process of this  strategy with that of the delta-hedging strategy. 


\section{General Setting}
\label{sec:gs}
Let a discounted stock price be modelled by a semimartingale $S=(S_t)_{t\in[0,T]}$
on a filtered probability space $(\Omega,\cF,(\cF_t)_{t\in[0,T]},\PP)$.
Given a semimartingale $X$, we will use the notation $\Delta X_t:=X_t-X_{t-}$
and the convention $X_{0-}:=X_0$, i.e. $\Delta X_0=0$.

A \emph{strategy} is a pair $\pi=(x_0,\ffi)$,
where $x_0\in\bbR$ and $\ffi=(\ffi_t)_{t\in[0,T]}$ is a predictable $S$-integrable process.
Here $x_0$ is interpreted as the initial capital of this strategy,
and $\ffi_t$ as the number of stocks that a trader following the strategy $\pi$ 
holds at time~$t$.
A \emph{discounted value process} of a strategy $\pi=(x_0,\ffi)$ is
$V^\pi_t=x_0+\int_0^t\ffi_u\,\dd S_u$, $t\in[0,T]$.
A strategy $\pi=(x_0,\ffi)$ is \emph{admissible} if its discounted value process is bounded from below,
i.e. if it holds
\begin{equation}
\label{eq:adm}
\int_0^t\ffi_u\,\dd S_u\ge-c\quad\PP\text{-a.s.},\quad t\in[0,T],
\end{equation}
for some constant $c<\infty$.
Admissibility condition~\eqref{eq:adm} is needed in continuous time finance to preclude arbitrage via
(practically unfeasible) strategies that achieve positive terminal cashflows by allowing unbounded losses in the meantime
(see e.g.~\cite{DelbaenSchachermayer:94}, \cite{DelbaenSchachermayer:06}, \cite{HarrisonPliska:81}, or~\cite{Shiryaev:99}).
In this paper we define a \emph{contingent claim} to be a random variable $f(S_T)$
for some Borel function $f\colon(0,\infty)\to\bbR$ such that
\begin{equation}
\label{eq:claim}
f(S_T)\ge-c\quad\PP\text{-a.s.}
\end{equation}
for some constant $c<\infty$.
Here $f(S_T)$ is interpreted as a discounted payoff at time $T$.
An admissible strategy $\pi=(x_0,\ffi)$ \emph{replicates} a contingent claim $f(S_T)$ if $V^\pi_T=f(S_T)$ $\PP$-a.s.

Random variables that do not satisfy~\eqref{eq:claim} are \emph{not} called contingent claims in this paper
because we are only interested in contingent claims that can be replicated by admissible strategies.
Likewise, we do not consider path-dependent claims here because, in order to speak about delta-hedging strategies,
we only investigate contingent claims that are replicable by strategies with discounted value processes
of the form $V(t,S_t)$ for some function $V(t,x)$, $t\in[0,T]$, $x\in(0,\infty)$.

\section{Delta Hedging in the BMS Model}
\label{sec:bsm}
In this section we briefly recall the notion of delta hedging in the BMS model.
In particular, this will make precise the statements mentioned in the introduction
such as which contingent claims can be replicated,
whether Ito's formula in~(B) on page~\pageref{p:B} can always be applied, etc. 
For general accounts of the BMS
model in the literature on stochastic finance see e.g. the 
monographs~\cite{BinghamKiesel:04}, \cite{Bjork:09}, \cite{DanaJeanblanc:03},
\cite{HuntKennedy:04}, \cite{JeanblancYorChesney:09}, \cite{KaratzasShreve:98},
\cite{LambertonLapeyre:08}, \cite{MusielaRutkowski:08}, \cite{Shiryaev:99},~\cite{Shreve:04}.

Let us consider a standard Brownian motion $W=(W_t)_{t\in[0,T]}$, $0<T<\infty$,
starting from~$0$ on a probability space $(\Omega,\cF,\PP)$,
and let $(\cF_t)_{t\in[0,T]}$ be the right-continuous filtration generated by~$W$.
Let a discounted stock price be given by
$S_t=S_0\exp\{\sigma W_t-(\sigma^2/2)t\}$ ($S_0>0$, $\sigma>0$),
i.e. we have $\dd S_t=\sigma S_t\,\dd W_t$.
In particular, $\PP$ is a risk-neutral measure,
and if $S$ is an $(\cF_t,\QQ)$-local martingale under some $\QQ\sim\PP$,
then $\QQ=\PP$ (i.e. $\PP$ is the unique equivalent risk-neutral measure).

First let us note that if $f(S_T)$ is a contingent claim with $\EE f(S_T)=\infty$,
then it cannot be replicated by an admissible strategy.
Indeed, the discounted value process $V^\pi$ of an admissible strategy $\pi$
is a supermartingale (as a bounded from below local martingale) and 
hence $V^\pi_T\in L^1$, which implies that the $\PP$-a.s.-equality $V^\pi_T=f(S_T)$ cannot hold.
In the rest of the section we therefore consider a contingent claim $f(S_T)$ satisfying~\eqref{eq:claim} and
\begin{equation}
\label{eq:bsm_claim}
\EE f(S_T)<\infty
\end{equation}
(in particular, $f(S_T)\in L^1$).
Let us now discuss how to replicate such a contingent claim~$f(S_T)$. Set
\begin{equation}
\label{eq:U1}
U_t=\EE(f(S_T)|\cF_t),\quad t\in[0,T].
\end{equation}
Since $U$ is an $(\cF_t)$-martingale and the filtration $(\cF_t)$ is generated by the Brownian motion $W$,
there exists a predictable process $\wt\ffi$ with $\int_0^T\wt\ffi_u^2\,\dd u<\infty$ $\PP$-a.s. such that
$U_t=U_0+\int_0^t\wt\ffi_u\,\dd W_u$, $t\in[0,T]$, and hence
\begin{equation}
\label{eq:U2}
U_t=U_0+\int_0^t\ffi_u\,\dd S_u,\quad t\in[0,T],
\end{equation}
where $\ffi_u:=\frac{\wt\ffi_u}{\sigma S_u}$.
In particular, the strategy $\pi:=(U_0,\ffi)$ replicates $f(S_T)$, and $V^\pi_t=U_t$, $t\in[0,T]$.
Finally, by \eqref{eq:claim} and~\eqref{eq:U1}, the process
$U(\equiv V^\pi)$ is bounded from below, hence $\pi$ is admissible.

Let us stress a delicate point related to the replication of contingent claims.
The strategy $\pi=(U_0,\ffi)$ constructed above 
is ubiquitous in the textbooks on stochastic finance
(but at times other argumentation is used to derive it). 
However, it appears to be less appreciated in the literature that
(even in the BMS model!) the same contingent claim $f(S_T)$ can be replicated via different
admissible strategies with different initial capitals. More precisely, we have:
\begin{proposition}
\label{prop:bsm1}
Let $f(S_T)$ satisfy \eqref{eq:claim} and~\eqref{eq:bsm_claim}.

(i) If $\pi=(x_0,\ffi)$ is an admissible strategy replicating $f(S_T)$, then $x_0\ge\EE f(S_T)$.

(ii) For any $x_0\ge\EE f(S_T)$, there is an admissible strategy $\pi=(x_0,\ffi)$ replicating $f(S_T)$.
\end{proposition}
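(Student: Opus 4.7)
The plan for (i) is the standard supermartingale argument. Since $S$ is a $\PP$-martingale, the discounted value process $V^\pi = x_0 + \int_0^\cdot \ffi_u\,\dd S_u$ of any admissible strategy $\pi=(x_0,\ffi)$ is a local martingale that is bounded from below, hence a supermartingale. The replication assumption $V^\pi_T=f(S_T)$ then yields
\[
x_0 = V^\pi_0 \ge \EE V^\pi_T = \EE f(S_T),
\]
which is the required inequality.

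For (ii), write $U_0 := \EE f(S_T)$ and $\delta := x_0 - U_0 \ge 0$. The strategy $\pi_0=(U_0,\ffi^0)$ built around \eqref{eq:U1}--\eqref{eq:U2} already handles the case $\delta=0$, so I will focus on $\delta>0$. The idea is to superimpose onto $\pi_0$ a zero-initial-capital component that is guaranteed to lose exactly $\delta$ by time $T$ while never dropping below $-\delta$ along the way. Concretely, I would seek a continuous local $(\cF_t)$-martingale $M$ with $M_0=0$, $M_t \ge -\delta$ for $t\in[0,T]$, and $M_T=-\delta$ $\PP$-a.s. Given such an $M$, Brownian martingale representation yields $M=\int_0^\cdot \psi_u\,\dd W_u$ for some predictable $\psi$, and setting $\ffi^1_u:=\psi_u/(\sigma S_u)$ and $\ffi:=\ffi^0+\ffi^1$ produces a candidate strategy with value process $V^\pi_t = x_0 + (U_t-U_0) + M_t$. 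At $t=T$ this equals $x_0 + f(S_T) - U_0 - \delta = f(S_T)$, so replication holds; and since both $U_t$ (the conditional expectation of a claim with lower bound $-c$) and $M_t$ are bounded from below, so is $V^\pi$, giving admissibility.

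The nontrivial step is the construction of $M$, and this is where the richness of the Brownian filtration is crucial. My plan is to take a deterministic function $a\colon[0,T)\to(0,\infty)$ in $L^2_\loc([0,T))$ with $\int_0^T a_u^2\,\dd u=\infty$ (for instance $a_u=(T-u)^{-1/2}$) and form the continuous Gaussian $(\cF_t)$-martingale $L_t=\int_0^t a_u\,\dd W_u$ for $t<T$. Its quadratic variation explodes as $t\uparrow T$, so by the Dambis--Dubins--Schwarz representation the hitting time $\tau:=\inf\{t<T: L_t=-\delta\}$ is $\PP$-a.s.\ strictly less than $T$. Setting $M_t:=L_{t\wedge\tau}$ (and $M_T:=-\delta$ on the null set where $\tau=T$) then has all the required properties, with $\psi_u = a_u\mathbf{1}_{\{u\le\tau\}}$. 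I expect this single construction to be the only technical point in the proof; everything else is a direct verification.
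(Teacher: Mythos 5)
Your proposal is correct and follows essentially the same route as the paper: the supermartingale argument for (i), and for (ii) the superposition of a zero-net-effect local martingale built from an integrand $a_u$ with $\int_0^T a_u^2\,\dd u=\infty$, stopped at a hitting time which the Dambis--Dubins--Schwarz theorem shows occurs before $T$ almost surely. The only cosmetic differences are that the paper starts its auxiliary martingale at $x=x_0-U_0$ and stops it at $0$ (rather than starting at $0$ and stopping at $-\delta$) and uses $a_u=(T-u)^{-1}$ instead of $(T-u)^{-1/2}$.
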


\begin{proof}
(i) Let $\pi=(x_0,\ffi)$ be an admissible strategy replicating $f(S_T)$.
Then $V^\pi$ is a supermartingale as a local martingale bounded from below. Hence, $x_0\ge\EE V^\pi_T=\EE f(S_T)$.

(ii) For the admissible replicating strategy $\pi=(U_0,\ffi)$ with the discounted value process $U$
constructed in \eqref{eq:U1} and~\eqref{eq:U2} we have $U_0=\EE f(S_T)$.
Now let $x_0>U_0$ and set $x=x_0-U_0$. If we find a nonnegative local martingale
$M$ with $M_0=x$ and $M_T=0$ $\PP$-a.s. (which will therefore necessarily be a strict supermartingale), 
then the result follows.
Indeed, by the integral representation theorem, there exists a predictable process $\wt\psi$
such that $\int_0^T \wt\psi_u^2\,\dd u<\infty$ $\PP$-a.s. and
$M_t=x+\int_0^t\wt\psi_u\,\dd W_u=x+\int_0^t\psi_u\,\dd S_u$, $t\in[0,T]$,
where $\psi_u:=\frac{\wt\psi_u}{\sigma S_u}$.
Then the strategy $(x_0,\ffi+\psi)$ is admissible and replicates~$f(S_T)$.

In order to find such a nonnegative local martingale $M$, we first consider the continuous local martingale
$L=(L_t)_{t\in[0,T)}$ on $[0,T)$ defined by the formula
$$
L_t=x+\int_0^t\frac1{T-u}\,\dd W_u,\quad t\in[0,T).
$$
By the Dambis--Dubins--Schwarz theorem, $L_t=B_{\la L,L\ra_t}$, $t\in[0,T)$,
for some Brownian motion $B$ starting from~$x$.
Since $\la L,L\ra_t=\frac1{T-t}-\frac1T\nearrow\infty$ as $t\nearrow T$,
we get
$$
\limsup_{t\nearrow T}L_t=-\liminf_{t\nearrow T}L_t=\infty\quad\PP\text{-a.s.},
$$
hence
$$
\tau:=\inf\{t\in[0,T)\colon L_t=0\}<T\quad\PP\text{-a.s.}
$$
It remains to set $M=L^\tau$.
\end{proof}

Thus, any contingent claim $f(S_T)$ satisfying \eqref{eq:claim} and~\eqref{eq:bsm_claim}
can be replicated via different admissible strategies with different
initial capitals, but the discounted value process $U$ of the strategy $\pi=(U_0,\ffi)$
constructed in \eqref{eq:U1} and~\eqref{eq:U2} starts from the minimal possible initial capital.
Now we are going to prove that this strategy $\pi$ has to coincide with the delta-hedging 
strategy.

We have
\begin{equation}
\label{eq:bsm1}
U_t=\EE(f(S_T)|\cF_t)=\EE\left(f\left(S_t e^{\sigma(W_T-W_t)-\frac{\sigma^2}2(T-t)}\right)|\cF_t\right)=V(t,S_t)
\quad\PP\text{-a.s.},
\end{equation}
where
$$
V(t,x):=\EE f\left(x e^\xi\right),\quad t\in[0,T],\;x\in(0,\infty),
$$
with $\xi\sim N(-\sigma^2(T-t)/2,\sigma^2(T-t))$.
The last equality in~\eqref{eq:bsm1} follows from the facts that
$S_t$ is $\cF_t$-measurable and $W_T-W_t$ is independent of~$\cF_t$.
Let us set
\begin{align*}
\ol f(y)&=f\left(e^y\right),\quad y\in\bbR,\\
\ol V(t,y)&=V\left(t,e^y\right),\quad t\in[0,T],\;y\in\bbR.
\end{align*}
Then
$$
\ol V(t,y)=\EE\ol f(y+\xi)=\frac1{\sqrt{2\pi(T-t)}\sigma}\int_\bbR
\ol f(z)e^{-\frac{(z-y+\sigma^2(T-t)/2)^2}{2\sigma^2(T-t)}}\,\dd z.
$$
It follows that $\ol V\in C^{1,2}([0,T)\times\bbR)$ and hence $V\in C^{1,2}([0,T)\times(0,\infty))$.

By Ito's formula,
\begin{equation}
\label{eq:bsm2}
U_t=V(t,S_t)=U_0+\int_0^t\frac{\partial V}{\partial x}(u,S_u)\,\dd S_u+A_t,\quad t\in[0,T),
\end{equation}
where $A$ is a continuous process of finite variation.
The stochastic integrals with respect to $S$ in \eqref{eq:U2} and~\eqref{eq:bsm2}
are continuous local martingales. Thus, $A\equiv0$ and the replicating strategy $\pi=(U_0,\ffi)$
is the delta-hedging strategy: $\ffi_t=\frac{\partial V}{\partial x}(t,S_t)$, $t\in[0,T)$
(note that the terminal value $\ffi_T$ is, in fact, irrelevant).

\begin{remark}
\label{rem:bsm1}
The argument above also explains that it is essentially Ito's formula
for continuous semimartingales that justifies delta hedging in the BMS model,
hence there is no reason for delta hedging to be a sensible strategy in models with jumps.
Indeed, let us now suppose that $S$ is a cadlag martingale
(or local martingale, or sigma-martingale; see~\cite{DelbaenSchachermayer:98}),
so that we again start with a risk-neutral measure from the outset.
Further let us consider a bounded from below integrable contingent claim $f(S_T)$ and assume that
\begin{equation}
\label{eq:bsm3}
U_t:=\EE(f(S_T)|\cF_t)=U_0+\int_0^t\ffi_u\,\dd S_u,\quad t\in[0,T],
\end{equation}
for some strategy~$\ffi$ (which is then automatically admissible), and that
$$
U_t=V(t,S_t),\quad t\in[0,T],
$$
for some $C^{1,2}$-function $V(t,x)$, $t\in[0,T]$, $x\in\bbR$.
By Ito's formula we find
\begin{align}
\label{eq:bsm4}
V(t,S_t)&=U_0+\int_0^t\frac{\partial V}{\partial x}(u,S_{u-})\,\dd S_u
+\int_0^t\frac{\partial V}{\partial t}(u,S_{u-})\,\dd u
+\frac12\int_0^t\frac{\partial^2 V}{\partial x^2}(u,S_{u-})\,\dd\la S^c,S^c\ra_u\\
\notag
&+\sum_{u\le t}\left(V(u,S_u)-V(u,S_{u-})-\frac{\partial V}{\partial x}(u,S_{u-})\Delta S_u\right),
\end{align}
where $S^c$ denotes the continuous martingale part of~$S$ (see e.g. \cite[Ch.~I, Prop.~4.27 and Th.~4.57]{JacodShiryaev:03}).
In contrast to the BMS model, comparison of \eqref{eq:bsm3} and~\eqref{eq:bsm4}
does not imply that $\ffi_t$ equals $\frac{\partial V}{\partial x}(t,S_{t-})$
because there may be a nontrivial martingale part in the jump term in the right-hand side of~\eqref{eq:bsm4}.
\end{remark}

\section{Hedging in a Complete Market with Jumps}
\label{sec:j}
Let us consider a standard Poisson process $N=(N_t)_{t\in[0,T]}$, $0<T<\infty$, starting from~$0$
with an intensity $\lambda>0$ on a probability space $(\Omega,\cF,\PP)$,
and let $(\cF_t)_{t\in[0,T]}$ be the right-continuous filtration generated by~$N$,
i.e. $\cF_t=\bigcap_{\eps>0}\sigma(N_u;\,u\in[0,t+\eps])$.
We set $X_t=N_t-\lambda t$, so $X$ is a martingale.
Let us fix $S_0>0$, $\sigma>0$ and consider the discounted stock price $(S_t)_{t\in[0,T]}$
driven by the SDE
$$
\dd S_t=\sigma S_{t-}\,\dd X_t,
$$
i.e.
$$
S_t=S_0\cE(\sigma X)_t=S_0e^{\alpha N_t-\beta t}
$$
with
\begin{equation}
\label{eq:j_a1}
\alpha=\log(1+\sigma)>0,\quad\beta=\sigma\lambda>0.
\end{equation}
In particular, $S$ is a strictly positive  $(\cF_t,\PP)$-martingale.
It follows from \cite[Th.~II.4.5]{JacodShiryaev:03} and the fact that
$(\cF_t)$ is generated by $N$ that if $X$ (or equivalently,~$S$)
is an $(\cF_t,\QQ)$-local martingale under some $\QQ\sim\PP$, then $\QQ=\PP$
(i.e. $\PP$ is the unique equivalent risk-neutral measure).

Like in the previous section, a contingent claim $f(S_T)$ with $\EE f(S_T)=\infty$
cannot be replicated by an admissible strategy. Therefore, below we always consider
a contingent claim $f(S_T)$ satisfying~\eqref{eq:claim} and
\begin{equation}
\label{eq:j_claim}
\EE f(S_T)<\infty
\end{equation}
(in particular, $f(S_T)\in L^1$).
In order to replicate such a contingent claim let us define the $(\cF_t)$-martingale
\begin{equation}
\label{eq:j_d1}
U_t=\EE(f(S_T)|\cF_t),\quad t\in[0,T].
\end{equation}
In particular, $U_0=\EE f(S_T)$.
By \cite[Th.~III.4.37]{JacodShiryaev:03}, there is a predictable process $\wt\ffi$
with\footnote{As in~\cite{JacodShiryaev:03}, we deonote by $\cA^+_\loc$ 
the class of nondecreasing adapted cadlag processes $A$ with $A_0=0$
such that there exists a nondecreasing sequence of stopping times $(\tau_n)$
with $\{\tau_n=T\}\nearrow\Omega$~$\PP$-a.s. and $\EE A_{\tau_n}<\infty$.}
$\int_0^\cdot|\wt\ffi_u|\,\dd N_u\in\cA^+_\loc$
(or equivalently, $\int_0^T|\wt\ffi_u|\,\dd u<\infty$ $\PP$-a.s.) such that
\begin{equation}
\label{eq:j_d2}
U_t=U_0+\int_0^t\wt\ffi_u\,\dd X_u,\quad t\in[0,T],
\end{equation}
hence
\begin{equation}
\label{eq:j_d3}
U_t=U_0+\int_0^t\ffi_u\,\dd S_u,\quad t\in[0,T],
\end{equation}
where $\ffi_u:=\frac{\wt\ffi_u}{\sigma S_{u-}}$.
Thus, $\pi:=(U_0,\ffi)$ is an admissible strategy replicating $f(S_T)$.
Furthermore, we have $U_t=V^\pi_t$, $t\in[0,T]$.

One may wonder whether in this simple model based on the Poisson process
we have a counterpart of Proposition~\ref{prop:bsm1}. The answer to this question is affirmative:
\begin{proposition}
\label{prop:j1}
Let $f(S_T)$ satisfy \eqref{eq:claim} and~\eqref{eq:j_claim}.

(i) If $\pi=(x_0,\ffi)$ is an admissible strategy replicating $f(S_T)$, then $x_0\ge\EE f(S_T)$.

(ii) For any $x_0\ge\EE f(S_T)$, there is an admissible strategy $\pi=(x_0,\ffi)$ replicating $f(S_T)$.
\end{proposition}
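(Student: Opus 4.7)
The plan is to follow the template of Proposition~\ref{prop:bsm1} very closely, exploiting the predictable martingale representation on the Poisson filtration. Part~(i) is the direct analogue of the BMS case: for any admissible replicating strategy $\pi$, the value process $V^\pi$ is a local martingale bounded from below, hence a supermartingale, so $x_0 = V^\pi_0 \ge \EE V^\pi_T = \EE f(S_T)$.

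For part~(ii), the case $x_0 = \EE f(S_T) = U_0$ is already settled by the strategy $(U_0,\ffi)$ of~\eqref{eq:j_d1}--\eqref{eq:j_d3}. For $x_0 > U_0$ I would, as in the BMS proof, reduce to constructing a nonnegative local martingale $M$ on $[0,T]$ with $M_0 = x_0 - U_0$ and $M_T = 0$ $\PP$-a.s. Given such $M$, the Poisson martingale representation theorem produces a predictable $\wt\psi$ with $M_t = (x_0-U_0) + \int_0^t \wt\psi_u\,\dd X_u$, which rewrites as $M_t = (x_0-U_0) + \int_0^t \psi_u\,\dd S_u$ with $\psi_u := \wt\psi_u/(\sigma S_{u-})$. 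The combined strategy $\pi' = (x_0, \ffi+\psi)$ has value process $V^{(U_0,\ffi)} + M$, bounded from below by admissibility of $(U_0,\ffi)$ plus $M \ge 0$, and $V^{\pi'}_T = f(S_T) + 0 = f(S_T)$ gives replication.

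The main obstacle is constructing $M$: the Brownian Dambis--Dubins--Schwarz argument is intrinsically continuous, so one needs a genuinely different construction. I would exploit the jump structure of the Poisson filtration pathwise. With $\tau_0 := 0$ and jump times $\tau_1 < \tau_2 < \cdots$ of $N$, declare recursively
\[
M_t := M_{\tau_k}\,\frac{T-t}{T-\tau_k}\qquad\text{for } t \in [\tau_k,\tau_{k+1}\wedge T),
\]
starting from $M_0 = x_0 - U_0$. Matching the deterministic drift $-M_{\tau_k}/(T-\tau_k)\,\dd t$ between jumps against $-\lambda\wt\psi_t\,\dd t$ forces $\wt\psi_t = M_{\tau_k}/[\lambda(T-\tau_k)]$, a nonnegative predictable integrand, so at each jump $M$ receives the positive increment $\wt\psi_{\tau_{k+1}}$ and stays strictly positive on $[0,T)$. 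Since $N_T < \infty$ and $\PP(\tau_j = T) = 0$ for each $j$, on every $\omega$ the final inter-jump segment reaches $T$ and forces $M_T = 0$. To confirm that $M$ so defined really is a local martingale, one only needs $\int_0^T|\wt\psi_u|\,\dd u < \infty$ $\PP$-a.s., which is a finite sum of $N_T + 1$ finite terms. The remaining verifications are routine; the pathwise recursion above is the only step requiring genuine care.
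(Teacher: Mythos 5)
Your proposal is correct and follows essentially the same route as the paper: part (i) via the supermartingale argument, part (ii) by reducing to a nonnegative local martingale $M$ with $M_0=x_0-U_0$ and $M_T=0$ built from the Poisson jump structure. Your recursive prescription $M_t=M_{\tau_k}\frac{T-t}{T-\tau_k}$ on each inter-jump interval yields exactly the integrand $\wt\psi$ that the paper writes in closed form as the product in~\eqref{eq:j_p2}, so the two constructions coincide.
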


\begin{proof}
The proof is similar to that of Proposition~\ref{prop:bsm1}.
The only difference is that now we need to construct, for any $x>0$,
a predictable process $\wt\psi$ with $\int_0^T|\wt\psi_u|\,\dd u<\infty$ $\PP$-a.s.
such that the local martingale
\begin{equation}
\label{eq:j_p1}
M_t:=x+\int_0^t\wt\psi_u\,\dd X_u,\quad t\in[0,T],
\end{equation}
is nonnegative and $M_T=0$ $\PP$-a.s.
For $n\in\bbN$ let us set
$$
\tau_n=T\wedge\inf\{t\in[0,T]\colon N_t=n\}\qquad(\inf\emptyset:=\infty).
$$
We define
\begin{equation}
\label{eq:j_p2}
\wt\psi_t=I_{\{0\le t\le\tau_1\}}\frac x{\lambda T}
+\sum_{n=1}^\infty I_{\{\tau_n<t\le\tau_{n+1}\}}\frac x{\lambda T}
\prod_{j=1}^n\left(1+\frac1{\lambda(T-\tau_j)}\right),\quad t\in[0,T].
\end{equation}
Then $\wt\psi$ is a finite piecewise constant predictable process on $[0,T]$,
hence $\int_0^T|\wt\psi_u|\,\dd u<\infty$ $\PP$-a.s.
A straightforward computation implies that the local martingale $M$ starting from $x$
defined in~\eqref{eq:j_p1} via $\wt\psi$ from~\eqref{eq:j_p2} is nonnegative and we have
$M_T=0$ $\PP$-a.s. This construction is illustrated in Figure~\ref{fig:j1}.
\end{proof}

\begin{figure}
\hspace{-0.6cm}
\input{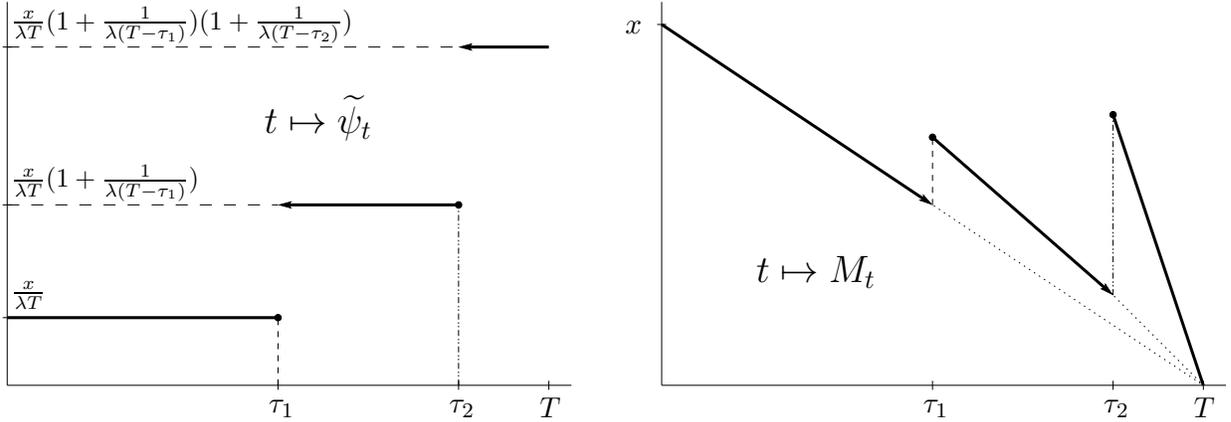}
\caption{\footnotesize{A schematic representation of typical paths
of the integrand $\wt\psi$ and the local martingale
$M=x+\int_0^\cdot\wt\psi_u\,\dd X_u$.
}}
\label{fig:j1}
\end{figure}

\begin{remark}
\label{rem:j1}
In general semimartingale models admissibility condition~\eqref{eq:adm}
required from strategies is needed to preclude arbitrage via strategies
that achieve positive terminal cashflows by allowing unbounded losses in the meantime.
One may wonder whether in this model based on the Poisson process,
arbitrage is excluded automatically without imposing~\eqref{eq:adm} on strategies.
The answer is negative: even in this simple model one should require~\eqref{eq:adm},
for otherwise the strategy $\pi:=(0,\ffi)$, where $\ffi:=-\frac{\wt\psi}{\sigma S_-}$
with $\wt\psi$ given by~\eqref{eq:j_p2}, starts from zero initial capital
and satisfies $V^\pi_T=x>0$ $\PP$-a.s. (in fact, $V^\pi=x-M$ with $M$ given by~\eqref{eq:j_p1}).
Such strategies are ruled out by the admissibility condition.
\end{remark}

Thus, any contingent claim $f(S_T)$ satisfying \eqref{eq:claim} and~\eqref{eq:j_claim}
can be replicated via different admissible strategies
with different initial capitals, but the discounted value process $U$ of the strategy $\pi=(U_0,\ffi)$
constructed in \eqref{eq:j_d1}--\eqref{eq:j_d3} starts from the minimal possible initial capital $U_0=\EE f(S_T)$.
Now we are going to express the  processes $U$ and $\ffi$ through the stock price process~$S$.

We have
$$
U_t=\EE(f(S_T)|\cF_t)=\EE\left(f\left(S_te^{\alpha(N_T-N_t)-\beta(T-t)}\right)|\cF_t\right)=V(t,S_t)\quad\PP\text{-a.s.},
$$
where
\begin{equation}
\label{eq:j_f1}
V(t,x)=\EE f\left(xe^{\alpha(N_T-N_t)-\beta(T-t)}\right)
=e^{-\lambda(T-t)}\sum_{k=0}^\infty f\left(xe^{\alpha k+\beta t-\beta T}\right)\frac{(\lambda(T-t))^k}{k!}.
\end{equation}
Let us note that the series in the right-hand side of~\eqref{eq:j_f1} converges absolutely for $x$ belonging
to the support of the distribution of $S_t$ because $f(S_T)\in L^1$
and hence $\EE\big(|f(S_T)|\big|\cF_t\big)<\infty$ $\PP$-a.s.
Even though, for a fixed $x$, the function $t\mapsto V(t,x)$ is not necessarily cadlag
(recall that $f$ is just a Borel function $(0,\infty)\to\bbR$
satisfying~\eqref{eq:claim} and~\eqref{eq:j_claim}),
$\PP$-a.s. it holds:
\begin{equation}
\label{eq:j_f2}
t\mapsto V(t,S_t)\quad\text{is cadlag and}\quad\lim_{u\nearrow t}V(u,S_u)=V(t,S_{t-}),\quad t\in(0,T].
\end{equation}
Indeed, substituting $x$ with $S_t$ in~\eqref{eq:j_f1} we get
\begin{equation}
\label{eq:j_f2.5}
V(t,S_t)=e^{-\lambda(T-t)}\sum_{k=0}^\infty f\left(S_0e^{\alpha k+\alpha N_t-\beta
T}\right)\frac{(\lambda(T-t))^k}{k!}.
\end{equation}
A similar formula for $V(t,S_{t-})$ holds with ``$N_{t-}$'' replacing ``$N_t$''.
Then \eqref{eq:j_f2} follows by noting that $\PP$-a.s. the process $N$ is cadlag and piecewise constant,
and that $\PP$-a.s. the series in the right-hand side of~\eqref{eq:j_f2.5} converges uniformly in $t\in[0,T]$.
The latter (technical) point is formally justified in Appendix~\ref{app:a}.

\begin{proposition}
\label{prop:j2}
A version of the second component $\ffi$
of the admissible strategy $\pi=(U_0,\ffi)$
replicating a contingent claim $f(S_T)$
from the minimal possible initial capital $U_0=\EE f(S_T)$
is given by the formula
\begin{equation}
\label{eq:j_f3}
\ffi_t=\frac{V(t,(1+\sigma)S_{t-})-V(t,S_{t-})}{\sigma S_{t-}},\quad t\in[0,T].
\end{equation}
Any other version is a predictable process that is equal to the right-hand side of~\eqref{eq:j_f3}
$\PP\times\mu_L$-a.e. on $\Omega\times[0,T]$, where $\mu_L$ denotes the Lebesgue measure.
\end{proposition}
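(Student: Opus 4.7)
The plan is to identify $\ffi$ up to a $\PP\times\mu_L$-null set by studying the trajectory of $U_t = V(t,S_t)$ separately at the jump times of $N$ (which coincide with those of $S$) and between consecutive jumps, and then matching the result with the Poisson martingale representation $U_t = U_0 + \int_0^t \wt\ffi_u\,\dd X_u$ from~\eqref{eq:j_d2}. Since $\ffi_u = \wt\ffi_u/(\sigma S_{u-})$ by construction, it suffices to show that $\wt\ffi$ equals $V(\cdot,(1+\sigma)S_-) - V(\cdot,S_-)$ $\PP\times\mu_L$-a.e.

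Step 1 (jump times). At any jump time $\tau$ of $N$ one has $S_\tau = (1+\sigma)S_{\tau-}$, so by the cadlag identification~\eqref{eq:j_f2},
\[
\Delta U_\tau = V(\tau,S_\tau) - V(\tau,S_{\tau-}) = V(\tau,(1+\sigma)S_{\tau-}) - V(\tau,S_{\tau-}),
\]
while \eqref{eq:j_d2} together with $\Delta N_\tau = 1$ yields $\Delta U_\tau = \wt\ffi_\tau$. This pins $\wt\ffi$ down at jump times.

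Step 2 (between jumps). On any interval $(T_i,T_{i+1})$ between consecutive jumps, $N_t = N_{T_i}$ is constant, so the explicit series~\eqref{eq:j_f2.5} depends on $t$ only through the factors $e^{-\lambda(T-t)}(\lambda(T-t))^k/k!$. Termwise differentiation in $t$, justified by the uniform convergence established in Appendix~\ref{app:a}, together with the shift identity $f(S_0 e^{\alpha(N_{T_i}+k+1) - \beta T}) = f((1+\sigma) S_0 e^{\alpha(N_{T_i}+k) - \beta T})$ and a re-indexing of the resulting sum, yields
\[
\frac{d}{dt}V(t,S_t) = \lambda\bigl[V(t,S_t) - V(t,(1+\sigma)S_t)\bigr],\qquad t\in(T_i,T_{i+1}).
\]
On the other hand, $X_t = N_t - \lambda t$ and~\eqref{eq:j_d2} imply that the absolutely continuous component of $U$ on $(T_i,T_{i+1})$ has rate $-\lambda\wt\ffi_t$. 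Equating these two expressions and using $S_t = S_{t-}$ off the jump times gives $\wt\ffi_t = V(t,(1+\sigma)S_{t-}) - V(t,S_{t-})$ Lebesgue-a.e. on $(T_i,T_{i+1})$.

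Combining the two steps, $\wt\ffi_t = V(t,(1+\sigma)S_{t-}) - V(t,S_{t-})$ $\PP\times\mu_L$-a.e., and dividing by $\sigma S_{t-} > 0$ establishes~\eqref{eq:j_f3}. For the uniqueness statement, any other version $\ffi'$ satisfies $\int_0^\cdot \sigma S_{u-}(\ffi_u - \ffi'_u)\,\dd X_u \equiv 0$, and the uniqueness assertion in \cite[Th.~III.4.37]{JacodShiryaev:03} forces $\sigma S_{u-}(\ffi_u - \ffi'_u) = 0$ $\PP\times\mu_L$-a.e.; since $S_{u-} > 0$, we conclude $\ffi = \ffi'$ $\PP\times\mu_L$-a.e. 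The main delicacy is the termwise differentiation in Step~2: because $f$ is merely Borel, no smoothness of $V$ is available and Ito's formula cannot be invoked directly; the argument instead proceeds through the absolute and uniform convergence of the Poisson-weighted series, which is exactly what Appendix~\ref{app:a} supplies.
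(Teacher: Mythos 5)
Your proof is correct, but it takes a genuinely different route from the paper's. After the common first step (matching jumps: both you and the paper observe that $\Delta U_t=\wt\ffi_t\Delta X_t$ forces $\wt\ffi_\tau=V(\tau,(1+\sigma)S_{\tau-})-V(\tau,S_{\tau-})$ at jump times, via~\eqref{eq:j_f2}), the two arguments diverge. The paper sets $\wt\chi=\wt\ffi-\wt\psi$ and argues structurally: $Y=\int_0^\cdot\wt\chi_u\,\dd X_u$ is a finite-variation local martingale, hence purely discontinuous, yet it has no jumps, hence $Y\equiv0$ by \cite[Th.~I.4.18]{JacodShiryaev:03}; then $\int_0^\cdot\wt\chi_u\,\dd N_u=\lambda\int_0^\cdot\wt\chi_u\,\dd u$ forces both sides to vanish and gives the $\PP\times\mu_L$-a.e.\ identification in one stroke, for an arbitrary version. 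You instead compute the drift explicitly between consecutive jumps by differentiating the series~\eqref{eq:j_f2.5} term by term, obtaining the backward-equation identity $\frac{\dd}{\dd t}V(t,S_t)=\lambda\bigl(V(t,S_t)-V(t,(1+\sigma)S_t)\bigr)$, and match it against the absolutely continuous part $-\lambda\wt\ffi_t$ of the stochastic integral. Your computation checks out (the re-indexing $c_{k+1}=f((1+\sigma)\cdot)$ is exactly right), and it has the merit of being elementary --- no appeal to the theory of purely discontinuous local martingales --- while also exhibiting the integro-differential equation satisfied by $V$ along the flow, which is of independent interest. Two small imprecisions, neither fatal: (i) uniform convergence of the series alone does not license termwise differentiation; what you actually need is that the differentiated series $\sum_k c_{k+1}x^k/k!$ has the same radius of convergence $R\ge\lambda T$ as the original, so that for $t>0$ one is strictly inside the disk of convergence --- this is precisely the power-series mechanism used in the proof of Lemma~\ref{lem:a1}, so the tool is available, just cited for the wrong reason; (ii) for the uniqueness clause, the uniqueness in \cite[Th.~III.4.37]{JacodShiryaev:03} is stated relative to the Dol\'eans measure of the compensator ($\lambda\,\dd t$ here), which is equivalent to $\mu_L$, so your appeal is legitimate in substance, but the paper's self-contained two-line argument (pure-jump process equals continuous process, hence both vanish) is cleaner and avoids any translation step.
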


\begin{remark}
\label{rem:j2}
As a consequence of this proposition we also get the following statement,
which is of interest when we work with specific contingent claims $f(S_T)$
and corresponding functions $V(t,x)$ (cf.~with the end of this section).
The process
\begin{equation}
\label{eq:j_f3.5}
\int_0^\cdot\left(V(u,(1+\sigma)S_{u-})-V(u,S_{u-})\right)\,\dd X_u,
\end{equation}
which is a priori only a local martingale, is in fact always a true martingale bounded from below
because, by Proposition~\ref{prop:j2}, it equals $U-U_0$ with $U$ given by~\eqref{eq:j_d1}.
Furthermore, it is a bounded martingale whenever the contingent claim $f(S_T)$ is bounded.
\end{remark}

\begin{proof}
Let $\ffi$ be a predictable process from~\eqref{eq:j_d3}.
Set $\wt\ffi=\sigma S_-\ffi$, i.e. $\wt\ffi$ is a predictable process
with $\int_0^T|\wt\ffi_u|\,\dd u<\infty$ $\PP$-a.s. and \eqref{eq:j_d2} holds.
Then $\PP$-a.s. we have
\begin{equation}
\label{eq:j_f4}
V(t,S_t)-V(t,S_{t-})=\Delta U_t=\wt\ffi_t\Delta X_t,\quad t\in[0,T].
\end{equation}
Since $\PP$-a.s. it holds: $\Delta X$ only takes values $0,1$
and, for $t\in[0,T]$, $S_t=e^\alpha S_{t-}=(1+\sigma)S_{t-}$ when $\Delta X_t=1$,
then the process
$$
\wt\psi_t:=V(t,(1+\sigma)S_{t-})-V(t,S_{t-}),\quad t\in[0,T],
$$
is a natural candidate for the process $\wt\ffi$.
Indeed, by~\eqref{eq:j_f2}, $\wt\psi$ is an adapted left-continuous process
with finite right-hand limits, hence $\wt\psi$ is predictable and locally bounded.
Consequently, $\int_0^T|\wt\psi_u|\,\dd u<\infty$ $\PP$-a.s.
and the integral $\int_0^\cdot\wt\psi_u\,\dd X_u$ is well-defined.
It remains to prove that $\PP\times\mu_L(\wt\ffi\ne\wt\psi)=0$.

Let us set $\wt\chi=\wt\ffi-\wt\psi$ and $Y=\int_0^\cdot\wt\chi_u\,\dd X_u$.
Since $\int_0^T|\wt\chi_u|\,\dd u<\infty$ $\PP$-a.s., the local martingale
$Y$ is the difference of the Lebesgue--Stieltjes integrals
$\int_0^\cdot\wt\chi_u\,\dd N_u-\int_0^\cdot\wt\chi_u\lambda\,\dd u$.
As a local martingale of finite variation $Y$ is a purely discontinuous local martingale
(see~\cite[Lem.~I.4.14~b)]{JacodShiryaev:03}).
But $Y$ does not have jumps because both $\wt\ffi$ and $\wt\psi$ satisfy~\eqref{eq:j_f4}.
Hence, by~\cite[Th.~I.4.18]{JacodShiryaev:03}, $Y_t=0$~$\PP$-a.s., $t\in[0,T]$. We have
\begin{equation}
\label{eq:j_f5}
\int_0^t\wt\chi_u\,\dd N_u=\int_0^t\wt\chi_u\lambda\,\dd u\quad\PP\text{-a.s.},\quad t\in[0,T].
\end{equation}
Since in the left-hand side of~\eqref{eq:j_f5} we have a pure jump process and in the right-hand side
a continuous process, both of them vanish $\PP$-a.s., and we get $\PP\times\mu_L(\wt\chi\ne0)=0$.
This concludes the proof.
\end{proof}

Proposition~\ref{prop:j1} implies that the admissible strategy
$\pi=(U_0,\ffi)$ with the discounted value process $U$, given in \eqref{eq:j_d1} and~\eqref{eq:j_d3},
replicates the contingent claim $f(S_T)$ from the minimal possible initial capital $U_0=\EE f(S_T)$.
Now we are going to compare the strategy $\pi$ with the delta-hedging strategy $\rho=(U_0,\delta)$,
$\delta_t=\frac{\partial V}{\partial x}(t,S_{t-})$,
and their respective discounted value processes $V^\pi$ and~$V^\delta$.
However, in order to make such a comparison, we need to assume additional regularity of the function $f$,
because in general there is no reason for the value function $V$ in~\eqref{eq:j_f1}
to be differentiable in the space variable.

Let a Borel measurable function 
$f\colon(0,\infty)\to\bbR$
satisfy~\eqref{eq:claim}
and~\eqref{eq:j_claim}
and assume that the following two assumptions hold:
\begin{eqnarray}
\label{eq:C_one}
f\in C^1(0,\infty)
\end{eqnarray}
and for every 
$x_0\in(0,\infty)$
there exists
$\eps\in(0,x_0)$
such that 
\begin{eqnarray}
\label{eq:Unif_Abs_Conv}
\sum_{k=0}^\infty \left|f'\left(xe^{\alpha k}\right)\right|\frac{(e^\alpha\lambda T)^k}{k!}\quad
\text{converges uniformly for}\quad x\in(x_0-\eps,x_0+\eps).
\end{eqnarray}
The parameter
$\alpha$
in~\eqref{eq:Unif_Abs_Conv}
is defined in~\eqref{eq:j_a1}
and 
$\lambda$
is the intensity of the Poisson process~$N$.
The following result
will allow us to define the delta-hedging strategy.

\begin{proposition}
\label{prop:j3}
Assume that a function
$f\colon(0,\infty)\to\bbR$
satisfies \eqref{eq:claim}, \eqref{eq:j_claim}, \eqref{eq:C_one}
and~\eqref{eq:Unif_Abs_Conv}.
Then the partial derivative
$\frac{\partial V}{\partial x}(t,x)$
of the function $V\colon[0,T]\times(0,\infty)\to\bbR$,
defined in~\eqref{eq:j_f1},
exists and is, for any
$(t,x)\in[0,T]\times(0,\infty)$,
given by the formula
\begin{eqnarray*}
\frac{\partial V}{\partial x}(t,x) & = & 
e^{-(\lambda+\beta) (T-t)} \sum_{k=0}^\infty f'\left(x e^{\alpha k
+\beta t-\beta T}\right)\frac{(e^\alpha \lambda(T-t))^k}{k!}.
\end{eqnarray*}
Furthermore the function
$\frac{\partial V}{\partial x}\colon[0,T]\times(0,\infty)\to\bbR$
is continuous.
\end{proposition}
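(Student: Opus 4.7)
The plan is to differentiate the series in~\eqref{eq:j_f1} termwise and justify the interchange of summation and differentiation by constructing a uniform majorant on a neighborhood of every point, with compactness being used to upgrade the local uniform convergence supplied by~\eqref{eq:Unif_Abs_Conv}. Writing $V(t,x)=\sum_{k=0}^\infty g_k(t,x)$ with $g_k(t,x)=f\!\left(xe^{\alpha k+\beta t-\beta T}\right)e^{-\lambda(T-t)}(\lambda(T-t))^k/k!$, an application of the chain rule (valid at the formal level by~\eqref{eq:C_one}) yields
$$
\frac{\partial g_k}{\partial x}(t,x)=f'\!\left(xe^{\alpha k+\beta t-\beta T}\right)e^{-(\lambda+\beta)(T-t)}\frac{(e^\alpha\lambda(T-t))^k}{k!},
$$
whose sum coincides with the expression claimed in the proposition after using $e^{-\lambda(T-t)}e^{\beta t-\beta T}=e^{-(\lambda+\beta)(T-t)}$ and $e^{\alpha k}\lambda^k=(e^\alpha\lambda)^k$.

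Next I would fix an arbitrary $(t_0,x_0)\in[0,T]\times(0,\infty)$, pick a closed interval $I\ni x_0$ inside $(0,\infty)$, and introduce the neighborhood $N:=[0,T]\times I$. Using $e^{-(\lambda+\beta)(T-t)}\le 1$ and $(T-t)^k\le T^k$, and setting $y:=xe^{\beta t-\beta T}$, I would obtain the crude bound
$$
\left|\frac{\partial g_k}{\partial x}(t,x)\right|\le \left|f'\!\left(ye^{\alpha k}\right)\right|\frac{(e^\alpha\lambda T)^k}{k!},\qquad (t,x)\in N.
$$
As $(t,x)$ ranges over $N$, the point $y$ ranges over a compact subset $K\subset(0,\infty)$ whose endpoints depend only on $I$ and $\beta T$. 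The hard part of the argument is the passage from the local uniform convergence of $\sum_{k}|f'(ye^{\alpha k})|(e^\alpha\lambda T)^k/k!$ near individual points (furnished by~\eqref{eq:Unif_Abs_Conv}) to uniform convergence on all of $K$: here one covers $K$ by finitely many of the neighborhoods provided by~\eqref{eq:Unif_Abs_Conv} and takes the maximum of the corresponding tail bounds. This delivers a Weierstrass $M$-type majorant that is independent of $(t,x)\in N$.

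With the above uniform majorization in place, the series $\sum_k\partial g_k/\partial x$ converges absolutely and uniformly on $N$. Since $\sum_k g_k(t,x)$ converges for every $(t,x)$ (it equals $V(t,x)$), the classical theorem on termwise differentiation, applied for each fixed $t$ on the interval $I$, shows that $\partial V/\partial x$ exists on $N$ and is given by $\sum_k\partial g_k/\partial x$, i.e.\ by the formula in the statement; since $(t_0,x_0)$ was arbitrary, the formula holds on all of $[0,T]\times(0,\infty)$. Continuity of $\partial V/\partial x$ is then immediate: each $\partial g_k/\partial x$ is continuous on $[0,T]\times(0,\infty)$ by~\eqref{eq:C_one} and the uniform convergence on a neighborhood of every point transfers this continuity to the sum.
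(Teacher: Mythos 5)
Your overall strategy --- termwise differentiation of the series in \eqref{eq:j_f1}, with locally uniform convergence of the differentiated series extracted from \eqref{eq:Unif_Abs_Conv} by a finite covering of a compact set of values $y=xe^{\beta t-\beta T}$, and continuity obtained from the same uniform bounds --- is essentially the paper's argument, and your computation of $\partial g_k/\partial x$ and the crude bound on $[0,T]\times I$ are correct.

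There is, however, one genuine gap: the parenthetical claim that ``$\sum_k g_k(t,x)$ converges for every $(t,x)$ (it equals $V(t,x)$)'' is circular. The series in \eqref{eq:j_f1} is a priori known to converge only for $x$ in the support of the law of $S_t$, i.e.\ at the countably many points $S_0e^{\alpha k-\beta t}$, $k\ge0$ (this is exactly the caveat recorded immediately after \eqref{eq:j_f1}); condition \eqref{eq:Unif_Abs_Conv} controls $f'$, not $f$, so it gives no direct information about convergence of $\sum_k g_k(t,\cdot)$ at a general point. The classical termwise-differentiation theorem you invoke needs convergence of $\sum_k g_k(t,\cdot)$ at at least one point of the interval $I$, and a small interval around an arbitrary $x_0$ need not contain any point of the support. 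The paper resolves precisely this issue via Lemma~\ref{lem:j1}, whose hypothesis is convergence at a \emph{single} point, taken to be $S_0e^{-\beta t}$, where convergence follows from $\EE|f(S_T)|<\infty$ (i.e.\ from \eqref{eq:claim} and \eqref{eq:j_claim}); part of the conclusion is then that the series defining $V$ in fact converges everywhere, so well-definedness of $V$ off the support is a consequence to be proved, not a given. Your proof is repaired by anchoring the interval: choose $I$ large enough to contain both $x_0$ and $S_0e^{-\beta t}$ (or the whole segment $[S_0e^{-\beta T},S_0]$ if you want one $I$ for all $t$); the compactness argument for the majorant is unaffected and the termwise-differentiation theorem then applies. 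As written, though, the convergence of the defining series away from the support is asserted rather than proved.
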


In order to prove Proposition~\ref{prop:j3},
we recall the following known lemma from analysis.
Note that the sequence of functions $(F_n)_{n\in\bbN}$ below
is only assumed to converge at a single point.

\begin{lemma}
\label{lem:j1}
Let 
$F_n\in C^1(0,\infty)$
for 
$n\in\bbN$
and assume
$F_n(x_0)\to A\in\bbR$
for some
$x_0>0$.
If there exists 
$G\colon(0,\infty)\to\bbR$ 
such that the sequence 
$(F_n')_{n\in\bbN}$
converges locally uniformly to 
$G$,
then there exist 
$F\in C^1(0,\infty)$
such that 
$(F_n)_{n\in\bbN}$
converges 
locally uniformly to
$F$
and 
$F'=G$.
\end{lemma}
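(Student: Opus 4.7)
The plan is to define the candidate limit $F$ directly via an antiderivative of $G$ and then verify both that $F \in C^1(0,\infty)$ with $F' = G$ and that $F_n \to F$ locally uniformly. The key observation is that since each $F_n'$ is continuous and $(F_n')_{n \in \bbN}$ converges locally uniformly to $G$, the limit $G$ is itself continuous on $(0,\infty)$. Hence we may set
\[
F(x) := A + \int_{x_0}^x G(s)\,\dd s, \quad x \in (0,\infty),
\]
and the fundamental theorem of calculus immediately gives $F \in C^1(0,\infty)$ with $F' = G$.

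Next I would establish the locally uniform convergence $F_n \to F$. By the fundamental theorem of calculus applied to each $F_n$, we have $F_n(x) = F_n(x_0) + \int_{x_0}^x F_n'(s)\,\dd s$ for every $x>0$. Subtracting the definition of $F$ yields
\[
F_n(x) - F(x) = \bigl(F_n(x_0) - A\bigr) + \int_{x_0}^x \bigl(F_n'(s) - G(s)\bigr)\,\dd s.
\]
For an arbitrary compact set $K \subset (0,\infty)$, choose $a,b$ with $0 < a \le \min(\inf K, x_0)$ and $b \ge \max(\sup K, x_0)$, so that $[a,b]$ is a compact subinterval containing both $K$ and $x_0$. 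Then for any $x \in K$,
\[
|F_n(x) - F(x)| \le |F_n(x_0) - A| + (b-a)\sup_{s\in[a,b]} |F_n'(s) - G(s)|.
\]
By hypothesis the first term vanishes as $n\to\infty$, and by the locally uniform convergence of $F_n'$ to $G$ on $[a,b]$ the second term vanishes as well. Taking the supremum over $x \in K$ gives $\sup_{x \in K}|F_n(x) - F(x)| \to 0$, which is precisely local uniform convergence on $(0,\infty)$.

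The proof involves no real obstacles: the only minor technical point is ensuring that the compact set $K$ on which one wants uniform convergence is enlarged to a compact interval $[a,b]$ that also contains the reference point $x_0$, so that the integral representation and the hypothesis on $F_n' \to G$ can be applied on the same interval. Once this is done, the entire argument reduces to the standard telescoping estimate above and the elementary fact that the uniform limit of continuous functions is continuous.
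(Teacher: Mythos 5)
Your proof is correct. Note that the paper itself does not prove this lemma at all: it is quoted as a ``known lemma from analysis'' and used as a black box in the proof of Proposition~\ref{prop:j3}, so you have supplied an argument the authors omitted. Your route is the natural one under the stated hypotheses: because the lemma assumes $F_n\in C^1(0,\infty)$, the locally uniform limit $G$ of the continuous derivatives is continuous, and the fundamental theorem of calculus both defines the limit $F(x)=A+\int_{x_0}^x G(s)\,\dd s$ and yields $F'=G$; the telescoping estimate
\[
|F_n(x)-F(x)|\le|F_n(x_0)-A|+(b-a)\sup_{s\in[a,b]}|F_n'(s)-G(s)|,
\]
with $[a,b]$ a compact interval containing both $K$ and $x_0$, then gives locally uniform convergence. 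This is simpler than the classical textbook version of the result (e.g.\ the theorem on differentiating sequences in Rudin), which assumes only differentiability of the $F_n$ and therefore needs a mean-value-theorem and Cauchy-criterion argument in place of the integral representation; that extra generality is not needed here, since the application in Proposition~\ref{prop:j3} produces genuinely $C^1$ partial sums. Your handling of the one technical point---enlarging the compact set $K$ to an interval containing the anchor point $x_0$ so that both the hypothesis on $F_n'$ and the integral representation apply on the same set---is exactly right.
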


\begin{proof}[Proof of Proposition~\ref{prop:j3}]
Fix any
$t\in[0,T]$
and
define a sequence of functions
$$
F_n(x) = 
e^{-\lambda(T-t)}\sum_{k=0}^n f\left(xe^{\alpha k+\beta t-\beta
T}\right)\frac{(\lambda(T-t))^k}{k!},\quad x\in(0,\infty),\> n\in\bbN.
$$
Since 
$T-t\leq T$,
condition~\eqref{eq:Unif_Abs_Conv}
implies that the function
$G\colon(0,\infty)\to\bbR$,
given by
$G(x):=\lim_{n\to\infty} F'_n(x)$,
is well-defined
and that the sequence 
$(F'_n)_{n\in\bbN}$
converges locally uniformly to 
$G$
on the interval
$(0,\infty)$.
Assumptions~\eqref{eq:claim} and~\eqref{eq:j_claim}
imply that 
$f(S_T)\in L^1$
and hence we find
$$
e^{-\lambda T}\sum_{k=0}^\infty\left| f\left(S_0 e^{\alpha k-\beta T}\right)\right|
\frac{(\lambda T)^k}{k!} =\EE |f(S_T)| <\infty.
$$
This implies that for 
$x_0=e^{-\beta t}S_0\in(0,\infty)$,
the sequence 
$(F_n(x_0))_{n\in\bbN}$
converges to a real number. 
Lemma~\ref{lem:j1} now implies differentiability of $V$ in $x$
and the formula for $\frac{\partial V}{\partial x}(t,x)$.

To prove the continuity of the function
$\frac{\partial V}{\partial x}$
pick
any
$(s_0,y_0)\in[0,T]\times(0,\infty)$.
Let 
$\eps>0$
be as in~\eqref{eq:Unif_Abs_Conv}
for
$x_0:=y_0e^{\beta s_0-\beta T}$.
Then there exists 
$\eps'>0$
such that 
any
$(s,y)\in B_{\eps'}(s_0,y_0)$,
where the ball 
is given by
$B_{\eps'}(s_0,y_0)=\{(s',y')\in [0,T]\times(0,\infty)\colon
\max\{|s'-s_0|,|y'-y_0|\}<\eps'\}$,
satisfies
$|x_0-ye^{\beta s-\beta T}|<\eps$.
Assumption~\eqref{eq:Unif_Abs_Conv}
and the fact
$T-s\leq T$
for all 
$(s,y)\in B_{\eps'}(s_0,y_0)$
imply that the series
\begin{eqnarray*}
\sum_{k=0}^\infty \left|f'\left(ye^{\alpha k+\beta s-\beta T}\right)\right|\frac{(e^\alpha\lambda (T-s))^k}{k!}\quad
\text{converges uniformly in}\quad (s,y)\in B_{\eps'}(s_0,y_0).
\end{eqnarray*}
This implies joint continuity of $\frac{\partial V}{\partial x}$.
\end{proof}

Under the assumptions of Proposition~\ref{prop:j3},
we define the \emph{delta-hedging strategy} for the contingent claim $f(S_T)$
to be $\rho=(U_0,\delta)$ with $U_0=\EE S_T$ (cf.~with Proposition~\ref{prop:j1}) and
\begin{equation}
\label{eq:j_g1}
\delta_t=\frac{\partial V}{\partial x}(t,S_{t-}),\quad t\in[0,T].
\end{equation}
Let us note that $\delta$ is $S$-integrable as an adapted left-continuous process
with finite right-hand limits (hence predictable and locally bounded).
Below we again use the notation $\pi=(U_0,\ffi)$
for the admissible strategy replicating $f(S_T)$
given in Proposition~\ref{prop:j2}.
It is clear from \eqref{eq:j_f3} and~\eqref{eq:j_g1} that
in general
the strategies $\rho$ and $\pi$
have different discounted value processes and, in particular,
the delta-hedging strategy $\rho$ does not replicate $f(S_T)$.

Let us set
\begin{align}
\label{eq:j_g2}
\wt\ffi_t&=\sigma S_{t-}\ffi_t=V(t,(1+\sigma)S_{t-})-V(t,S_{t-}),
&t&\in[0,T],& & \\
\label{eq:j_g3}
\wt\delta_t&=\sigma S_{t-}\delta_t=\sigma S_{t-}\frac{\partial V}{\partial x}(t,S_{t-}),
&t&\in[0,T],& & 
\end{align}
and note that $V^\pi=U_0+\int_0^\cdot\wt\ffi_u\,\dd X_u$,
$V^\rho=U_0+\int_0^\cdot\wt\delta_u\,\dd X_u$.
The discounted replication error of the delta-hedging strategy $\rho$ is thus given by the formula
\begin{align*}
V^\rho_T-f(S_T)&=\int_0^T\left(\wt\delta_u-\wt\ffi_u\right)\,\dd X_u\\
&=-\int_0^T\left(V(u,(1+\sigma)S_{u-})-V(u,S_{u-})
-\sigma S_{u-}\frac{\partial V}{\partial x}(u,S_{u-})\right)\,\dd X_u.
\end{align*}
In particular, we have $\EE(V^\rho_T-f(S_T))=0$ whenever $V^\rho$ is a true martingale
(note that $V^\pi$ is always a true martingale by Remark~\ref{rem:j2}).
However, this fact cannot serve as a partial justification
for using the delta-hedging strategy in models with jumps, 
because if holds only under the risk-neutral measure.
Under a real-world probability measure the process $X$ is typically not a local martingale,
and hence the replication error of the delta-hedging strategy is typically 
biased (i.e. has a non-zero mean).

Finally, let us compare the strategies $\pi$ and $\rho$ in the following examples.

\begin{example}
\label{ex:j1}
Consider a contingent claim given by $f(S_T)=\log S_T$.
Note that the payoff $\log S_T$ is of interest in practice
as it arises naturally in the context of the derivatives on variance
(see e.g.~\cite{Gatheral:06}).
Clearly, the assumptions of Proposition~\ref{prop:j3} are satisfied
(note that $S_T\ge S_0e^{-\beta T}>0$ $\PP$-a.s.).
By~\eqref{eq:j_f1}, $V(t,x)=\log x+g(t)$ with some continuous function~$g$
(which can be computed explicitly).
By \eqref{eq:j_g2} and~\eqref{eq:j_g3}, we have
$$
\wt\ffi_t\equiv\log(1+\sigma),\quad\wt\delta_t\equiv\sigma,\quad t\in[0,T],
$$
hence
$$
V^\pi_t=U_0+\left(\log(1+\sigma)\right)X_t,\quad V^\rho_t=U_0+\sigma X_t,\quad t\in[0,T].
$$
In particular, $\pi$ and $\rho$ are admissible, $V^\pi$ and $V^\rho$ are true martingales,
$\pi$ replicates $\log S_T$, but $\rho$, which is the delta-hedging strategy for $\log S_T$,
replicates an affine transformation of $\log S_T$.
\end{example}

\begin{example}
\label{ex:j2}
Consider $f(S_T)=S_T^a$ with some $a\in\bbR\setminus\{0\}$.
Again, the assumptions of Proposition~\ref{prop:j3} are satisfied.
By~\eqref{eq:j_f1}, $V(t,x)=x^ag(t)$ with some continuous function~$g$. We get
$$
\wt\ffi_t=\left((1+\sigma)^a-1\right)S_{t-}^ag(t),\quad\wt\delta_t=a\sigma S_{t-}^ag(t),\quad t\in[0,T].
$$
It follows that
\begin{equation}
\label{eq:j_g4}
V^\rho-U_0=\frac{a\sigma}{(1+\sigma)^a-1}(V^\pi-U_0).
\end{equation}
By Remark~\ref{rem:j2}, $V^\pi$ is a martingale bounded from below.
Then, by~\eqref{eq:j_g4}, $V^\rho$ is a martingale bounded from below
(note that $\frac{a\sigma}{(1+\sigma)^a-1}>0$ regardless of whether $a>0$ or $a<0$).
In particular, the delta-hedging strategy $\rho$ is admissible.
But it does not replicate $S_T^a$ if $a\ne1$.

Finally, let us note that in the case $a<0$ the payoff $S_T^a$ is bounded.
Hence by Remark~\ref{rem:j2} and the formula in~\eqref{eq:j_g4} we have an example, where both discounted value processes $V^\pi$ and $V^\rho$
are bounded martingales, but the delta-hedging strategy does not replicate the 
contingent claim.
\end{example}

%

\appendix
\section{}
\label{app:a}
Here we prove that under the condition $f(S_T)\in L^1$ we have that, for $\PP$-a.a.~$\omega$,
the series in the right-hand side of~\eqref{eq:j_f2.5} converges uniformly in $t\in[0,T]$.

Let $M\in(0,\infty)$ and $a_k\ge0$, $k=0,1,\ldots$, be deterministic numbers such that
\begin{equation}
\label{eq:a1}
\sum_{k=0}^\infty a_k\frac{M^k}{k!}<\infty.
\end{equation}
Let $(N_x)_{x\in[0,M]}$ be a deterministic nondecreasing cadlag function with $N_0=0$
that takes only integer values and such that $\Delta N_x$, $x\in[0,M]$,
can take only values $0$ and~$1$. In other words, $(N_x)$ is a typical path of a Poisson process.
Further let us define deterministic numbers
$$
m=N_M\quad\text{and}\quad\tau_1=M\wedge\inf\{x\in[0,M]\colon N_x=1\}\quad(\inf\emptyset:=\infty)
$$
and note that $\tau_1>0$.

\begin{lemma}
\label{lem:a1}
The series
\begin{equation}
\label{eq:a2}
\sum_{k=0}^\infty a_{k+N_{M-x}}\frac{x^k}{k!}
\end{equation}
converges uniformly in $x\in[0,M]$.
\end{lemma}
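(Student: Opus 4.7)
The plan is to partition $[0,M]$ into two pieces according to whether $N_{M-x}=0$ or $N_{M-x}\ge 1$, and to treat each piece separately via the Weierstrass $M$-test. The crucial feature is that $\tau_1>0$: the set on which $N_{M-x}=0$ is precisely the interval $(M-\tau_1,M]$, whereas on the complementary interval $[0,M-\tau_1]$ one has $x\le M-\tau_1<M$ and the finitely many possible values of $N_{M-x}$ lie in $\{1,\ldots,m\}$.

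On $(M-\tau_1,M]$ the series in~\eqref{eq:a2} is simply $\sum_{k=0}^\infty a_k x^k/k!$, which is dominated termwise on $[0,M]$ by $a_kM^k/k!$; so assumption~\eqref{eq:a1} and the Weierstrass $M$-test deliver uniform convergence on the whole of $[0,M]$.

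On $[0,M-\tau_1]$ I would exploit that the power series $F(z):=\sum_{l=0}^\infty a_l z^l/l!$ has radius of convergence at least $M$, by~\eqref{eq:a1}. Setting $r:=M-\tau_1<M$, the point $r$ lies strictly inside the disk of convergence, so for every integer $j\ge 0$ all derivatives
\[
F^{(j)}(r)=\sum_{k=0}^\infty a_{k+j}\frac{r^k}{k!}
\]
converge absolutely. Since $N_{M-x}$ takes only the values $1,\ldots,m$ on this interval, for $x\in[0,r]$ one has the uniform majorant
\[
a_{k+N_{M-x}}\frac{x^k}{k!}\;\le\;\sum_{j=1}^m a_{k+j}\frac{r^k}{k!},
\]
whose sum over $k$ equals $\sum_{j=1}^m F^{(j)}(r)<\infty$. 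Weierstrass then yields uniform convergence on $[0,M-\tau_1]$, and combining with the previous step settles the lemma.

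The main obstacle is understanding why the boundary value $x=M$ is harmless. A naive majorization of the form $a_{k+j}M^k/k!$ with $j\ge 1$ need not be summable in $k$, since convergence of $\sum_l a_l M^l/l!$ only forces the radius of convergence of $F$ to be $\ge M$ and the shifted series $F^{(j)}(M)$ can diverge at the boundary. The argument sidesteps this by the observation that the shift $N_{M-x}$ becomes positive only after $x$ has already been pushed strictly below $M$ by at least $\tau_1>0$, so that one never needs to control a shifted series at the critical radius.
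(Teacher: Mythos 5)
Your proof is correct and follows essentially the same route as the paper's: both split $[0,M]$ at a point within distance $\tau_1$ of $M$ (you at $M-\tau_1$, the paper at $M-\tau_1/2$), handle the piece near $M$ via~\eqref{eq:a1} since the shift $N_{M-x}$ vanishes there, and handle the remaining compact piece by observing that the finitely many shifted series are derivative series of~\eqref{eq:a3} with the same radius of convergence $R\ge M$, hence converge uniformly strictly inside that radius. Your explicit Weierstrass majorant $\sum_{j=1}^m a_{k+j}r^k/k!$ is just an unpacking of the locally uniform convergence the paper cites, so no further changes are needed.
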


\begin{proof}
Let $R$ denote the radius of convergence of the power series (in~$x$)
\begin{equation}
\label{eq:a3}
\sum_{k=0}^\infty a_k\frac{x^k}{k!}.
\end{equation}
By~\eqref{eq:a1}, $R\ge M$.
The $m$ power series corresponding to $m$ derivatives of~\eqref{eq:a3} are
\begin{equation}
\label{eq:a4}
\sum_{k=0}^\infty a_{k+l}\frac{x^k}{k!},\quad l=1,2,\ldots,m,
\end{equation}
and they have the same radius of convergence $R\ge M$.
Then all $m+1$ series in \eqref{eq:a3} and~\eqref{eq:a4}
converge uniformly in $x\in[0,M-\eps]$ with $\eps:=\frac{\tau_1}2$,
hence the series in~\eqref{eq:a2} converges uniformly in $x\in[0,M-\eps]$.
It remains to note that~\eqref{eq:a2} converges uniformly also in $x\in[M-\eps,M]$
due to~\eqref{eq:a1} and the fact that $N_{M-x}=0$ for $x\in[M-\eps,M]$.
\end{proof}

Now using the notation of Section~\ref{sec:j}, we set $M=\lambda T$
and $a_k=|f(S_0e^{\alpha k-\beta T})|$, $k=0,1,\ldots$.
The condition~\eqref{eq:a1} amounts to $f(S_T)\in L^1$, which is satisfied.
Applying Lemma~\ref{lem:a1} pathwise we get the desired result.

\bibliographystyle{abbrv}
\bibliography{refs}

\begin{thebibliography}{10}

\bibitem{BinghamKiesel:04}
N.~H. Bingham and R.~Kiesel.
\newblock {\em Risk-neutral valuation}.
\newblock Springer Finance. Springer-Verlag London Ltd., London, second
  edition, 2004.
\newblock Pricing and hedging of financial derivatives.

\bibitem{Bjork:09}
T.~Björk.
\newblock {\em Arbitrage Theory in Continuous Time}.
\newblock Oxford University Press, third edition, 2009.

\bibitem{ContTankov:03}
R.~Cont and P.~Tankov.
\newblock {\em Financial Modelling With Jump Processes}.
\newblock Chapman \& Hall, 2003.

\bibitem{DanaJeanblanc:03}
R.-A. Dana and M.~Jeanblanc.
\newblock {\em Financial markets in continuous time}.
\newblock Springer Finance. Springer-Verlag, Berlin, 2003.
\newblock Translated from the 1998 French original by Anna Kennedy.

\bibitem{DavisPanasZariphopoulou:93}
M.~H.~A. Davis, V.~G. Panas, and T.~Zariphopoulou.
\newblock European option pricing with transaction costs.
\newblock {\em SIAM J. Control and Optimization}, 31(2):470--493, 1993.

\bibitem{DelbaenSchachermayer:94}
F.~Delbaen and W.~Schachermayer.
\newblock A general version of the fundamental theorem of asset pricing.
\newblock {\em Math. Ann.}, 300(3):463--520, 1994.

\bibitem{DelbaenSchachermayer:98}
F.~Delbaen and W.~Schachermayer.
\newblock The fundamental theorem of asset pricing for unbounded stochastic
  processes.
\newblock {\em Math. Ann.}, 312(2):215--250, 1998.

\bibitem{DelbaenSchachermayer:06}
F.~Delbaen and W.~Schachermayer.
\newblock {\em The mathematics of arbitrage}.
\newblock Springer Finance. Springer-Verlag, Berlin, 2006.

\bibitem{FollmerSchied:06}
H.~F{\"o}llmer and A.~Schied.
\newblock {\em Stochastic finance}, volume~27 of {\em de Gruyter Studies in
  Mathematics}.
\newblock Walter de Gruyter \& Co., Berlin, extended edition, 2004.
\newblock An introduction in discrete time.

\bibitem{Gatheral:06}
J.~Gatheral.
\newblock {\em The volatility surface: a practitoner's guide}.
\newblock John Wiley \& Sons, Inc., 2006.

\bibitem{HarrisonPliska:81}
J.~M. Harrison and S.~R. Pliska.
\newblock Martingales and stochastic integrals in the theory of continuous
  trading.
\newblock {\em Stochastic Process. Appl.}, 11(3):215--260, 1981.

\bibitem{HubalekKallsenKrawczyk:06}
F.~Hubalek, J.~Kallsen, and L.~Krawczyk.
\newblock Variance-optimal hedging for processes with stationary independent
  increments.
\newblock {\em Ann. Appl. Probab.}, 16(2):853--885, 2006.

\bibitem{HuntKennedy:04}
P.~J. Hunt and J.~E. Kennedy.
\newblock {\em Financial derivatives in theory and practice}.
\newblock Wiley Series in Probability and Statistics. John Wiley \& Sons Ltd.,
  Chichester, revised edition, 2004.

\bibitem{JacodShiryaev:03}
J.~Jacod and A.~N. Shiryaev.
\newblock {\em Limit {T}heorems for {S}tochastic {P}rocesses}, volume 288 of
  {\em Grundlehren der Mathematischen Wissenschaften}.
\newblock Springer-Verlag, Berlin, second edition, 2003.

\bibitem{JeanblancYorChesney:09}
M.~Jeanblanc, M.~Yor, and M.~Chesney.
\newblock {\em Mathematical methods for financial markets}.
\newblock Springer Finance. Springer-Verlag London Ltd., London, 2009.

\bibitem{KaratzasShreve:98}
I.~Karatzas and S.~E. Shreve.
\newblock {\em Methods of {M}athematical {F}inance}, volume~39 of {\em
  Applications of Mathematics (New York)}.
\newblock Springer-Verlag, New York, 1998.

\bibitem{LaiXing:08}
T.~L. Lai and H.~Xing.
\newblock {\em Statistical models and methods for financial markets}.
\newblock Springer Texts in Statistics. Springer, New York, 2008.

\bibitem{LambertonLapeyre:08}
D.~Lamberton and B.~Lapeyre.
\newblock {\em Introduction to stochastic calculus applied to finance}.
\newblock Chapman \& Hall/CRC Financial Mathematics Series. Chapman \&
  Hall/CRC, Boca Raton, FL, second edition, 2008.

\bibitem{Lipton:01}
A.~Lipton.
\newblock {\em Mathematical Methods for Foreign Exchange}.
\newblock World Scientific, 2001.

\bibitem{MusielaRutkowski:08}
M.~Musiela and M.~Rutkowski.
\newblock {\em Martingale methods in financial modelling}, volume~36 of {\em
  Stochastic Modelling and Applied Probability}.
\newblock Springer-Verlag, Berlin, second edition, 2005.

\bibitem{Schoutens:03}
W.~Schoutens.
\newblock {\em Levy Processes in Finance: Pricing Financial Derivatives}.
\newblock Wiley, 2003.

\bibitem{Schweizer:01}
M.~Schweizer.
\newblock A guided tour through quadratic hedging approaches.
\newblock In {\em Option pricing, interest rates and risk management}, Handb.
  Math. Finance, pages 538--574. Cambridge Univ. Press, Cambridge, 2001.

\bibitem{Shiryaev:99}
A.~N. Shiryaev.
\newblock {\em Essentials of Stochastic Finance}, volume~3 of {\em Advanced
  Series on Statistical Science \& Applied Probability}.
\newblock World Scientific Publishing Co. Inc., River Edge, NJ, 1999.
\newblock Facts, models, theory, Translated from the Russian manuscript by N.
  Kruzhilin.

\bibitem{Shreve:04}
S.~E. Shreve.
\newblock {\em Stochastic calculus for finance. {II}}.
\newblock Springer Finance. Springer-Verlag, New York, 2004.
\newblock Continuous-time models.

\bibitem{Tankov:10}
P.~Tankov.
\newblock {\em Pricing and hedging in exponential {L}\'evy models: review of
  recent results}.
\newblock Paris-Princeton Lecture Notes in Mathematical Financee.
  Springer-Verlag, Berlin, 2010.

\end{thebibliography}
\end{document}